\theoremstyle{plain}
\newtheorem{theorem}{Theorem}
\newtheorem{lemma}[theorem]{Lemma}
\title{Incomplete Directed Perfect Phylogeny in Linear Time}
\author[1]{Giulia Bernardini}
\author[1]{Paola Bonizzoni}
\author[2]{Paweł Gawrychowski}
\affil[1]{DISCo, Universit\`{a} degli Studi Milano - Bicocca, Italy}
\affil[2]{Institute of Computer Science, University of Wrocław, Poland}
\newcommand{\cT}{\mathcal{T}}
\newcommand{\cO}{\mathcal{O}}
\newcommand{\cA}{\mathcal{A}}
\newcommand{\cB}{\mathcal{B}}
\newcommand{\DC}{\textsc{DC}}
\newcommand{\DT}{\textsc{DT}}
\newcommand{\IDP}{\textsc{IDPP}}
\newcommand{\comp}{\textnormal{\textsf{components}}}
\newcommand{\Dynproblem}[4]{
\begin{framed}
  \noindent
  \textbf{Problem:} \textsc{#1}

  \noindent
  \textbf{Input:} #2

  \noindent
  \textbf{Update:} #3
  
  \noindent
  \textbf{Query:} #4  
\end{framed}
}
\begin{document}
\date{}
\maketitle

\thispagestyle{empty}

\begin{abstract}
Reconstructing the evolutionary history of a set of species is a central task in computational biology. In real data, it is often the case that some information is missing: the Incomplete Directed Perfect Phylogeny (\IDP) problem asks, given a collection of species described by a set of binary characters with some unknown states, to complete the missing states in such a way that the result can be explained with a perfect directed phylogeny.
Pe'er et al. proposed a solution that takes $\tilde{\cO}(nm)$ time for $n$ species and $m$ characters. Their algorithm relies on pre-existing dynamic connectivity data structures: a computational study recently conducted by Fern{\'a}ndez-Baca and Liu showed that, in this context, complex data structures perform worse than simpler ones with worse asymptotic bounds.

This gives us the motivation to look into the particular properties of the dynamic connectivity problem in this setting, so as to avoid the use of sophisticated data structures as a blackbox.
Not only are we successful in doing so, and give a much simpler $\tilde{\cO}(nm)$-time algorithm for the \IDP{} problem; our insights into the specific structure of the problem lead to an asymptotically faster algorithm, that runs in optimal $\cO(nm)$ time. 
\end{abstract}

\clearpage
\setcounter{page}{1}

\section{Introduction}
A rooted phylogenetic tree models the evolutionary history of a set of species: the leaves are in a one-to-one correspondence with the species, all of which have a common ancestor represented by the root. 
A way of describing the species is by a set of characters that can assume several possible states, so that each species is described by the states of its characters. Such a representation is naturally encoded by a matrix $\cA$, $a_{i,j}$ being the state of character $j$ in species $i$.

When, for each possible character state, the set of all nodes that have the same state induces a connected
subtree, a phylogeny is called \emph{perfect}. The problem of reconstructing a perfect phylogeny from a set of species is known to be linearly-solvable in the case when the characters are binary~\cite{gusfield1991efficient}, and it is NP-hard in the general case~\cite{bodlaender2000hardness}. A popular variant of binary perfect phylogeny requires that the characters are directed, that is, on any path from the root to a leaf a character can change its state from $0$ to $1$, but the opposite cannot happen~\cite{camin1965method}.

In this paper, we study the Incomplete Directed Perfect Phylogeny problem (\IDP{} for short) introduced by Pe'er et al.~\cite{pe2004incomplete}, assuming that the characters are binary, directed, and can be gained only once.
The input of this problem is a matrix of character vectors in which some character states are unknown, and the question is whether it is possible to complete the missing states in such a way that the result can be explained with a directed perfect phylogeny.

\paragraph{Related work.} Besides being relevant in its own right~\cite{nikaido1999phylogenetic, bashir2005orthologous,6819844, satya2008,stevens2010reducing}, the problem of handling phylogenies with missing data arises in 
various tasks of computational biology, like resolving genotypes with some missing information into haplotypes~\cite{kimmel2005incomplete} and inferring tumor phylogenies from single-cell sequencing data with mutation losses~\cite{satas2020scarlet}. 
A generalization of the perfect phylogeny model where a character can be gained only once and can be lost at most $k$ times, called the $k$-Dollo  model~\cite{bonizzoni2012binary,gusfield2015persistent, bonizzoni2017beyond,el2018sphyr}, has also been extensively studied.
It should be clear that different and efficient solutions for the \IDP{} problem may highlight novel approaches for the above mentioned computational frameworks.


The approach of Pe'er et al.~\cite{pe2004incomplete} to the \IDP{} problem is graph theoretic: 
their algorithm relies on maintaining the connected components of a graph under a sequence of edge deletions.
The use of pre-existing dynamic connectivity data structures for this purpose is the bottleneck in the overall time complexity. 
A connectivity data structure is \emph{fully-dynamic} when both edge insertion and deletion are allowed, and \emph{decremental} when only edge deletion is considered.
A long line of results brought down the computational time required for updating the data structure after edge insertions and/or deletions, and for answering connectivity queries, to roughly logarithmic: the following table summarizes the results for both fully-dynamic and decremental connectivity 
on a graph consisting of $N$ nodes and $M$ edges.
For fully-dynamic connectivity we report the update time required for a single edge insertion or deletion, while for decremental connectivity we report the overall time required to eventually delete all of the edges.
All of the listed results, except for~\cite{henzinger1999constructing}, assume that edge deletions can be interspersed with connectivity queries.
The algorithm of Henzinger et al.~\cite{henzinger1999constructing}, in contrast, deletes edges in batches ($b_0$ is the number of batches that do not result in a new component) and connectivity queries can be only asked between one batch of deletions and another.

\[
\begin{array}{lll}
\toprule
\textbf{Fully-Dynamic}& \textbf{Update time} & \textbf{Query time} \\
\midrule
 \textnormal{Holm et al.~\cite{holm2001poly}}&  \cO(\log^2 N), \textnormal{amortized}   &  \cO(\log N / \log\log N)\\
 \textnormal{Gibb et al.~\cite{gibb2015dynamic}} &\cO(\log^4 N), \textnormal{worst case} & \cO(\log N/\log\log N)\,\,\textnormal{w.h.p.} \\
 \textnormal{Huang et al.~\cite{huang2017fully}} & \cO(\log N(\log\log N)^2), \textnormal{expected amortized} & \cO(\log N/\log\log\log N) \\
\midrule
\textbf{Decremental}& \textbf{Total update time} & \textbf{Query time}\\
\midrule
 \textnormal{Even et al.~\cite{EvenS81}} & \cO(MN) & \cO(1)\\
 \textnormal{Thorup~\cite{thorup1999decremental}} & \cO(M\log^{2}(N^{2}/M)+N\log^{3} N\log\log N), \textnormal{expected} & \cO(1)\\
 \textnormal{Henzinger et al.~\cite{henzinger1999constructing}} & \cO(N^2\log N+b_0\min\{N^2,M\log N\}) & \cO(1)\\
\bottomrule
\end{array}
\]

By plugging in an appropriate dynamic connectivity structure, the worst case running time of the approach of Pe'er et al.~\cite{pe2004incomplete},
given a matrix describing $n$ species and $m$ characters, becomes
deterministic $\cO(nm\log^{2}(n+m)$ (using fully dynamic connectivity structure of Holm et al.~\cite{holm2001poly}),
expected $\cO(nm\log((n+m)^{2}/nm)+(n+m)\log^{3}(n+m)\log\log(n+m))$ (using decremental connectivity structure of Thorup~\cite{thorup1999decremental}),
expected $\cO(nm\log(n+m)(\log\log(n+m))^{2})$ (using fully dynamic connectivity structure of Huang et al.~\cite{huang2017fully}), 
or deterministic $\cO((n+m)^{2}\log(n+m))$ (using decremental structure of Henzinger et al.~\cite{henzinger1999constructing}).
This should be compared with a lower bound of $\Omega(nm)$, following from the work of Gusfield on directed binary perfect phylogeny~\cite{gusfield1991efficient} (under the natural assumption that the input is given as a matrix).
For $n=m$, the second algorithm achieves this lower bound at the expense of randomisation (and being very complicated),
while for the general case the asymptotically fastest solution is still at least one log factor away from the lower bound.

Inspecting the algorithm of Pe'er et al.~\cite{pe2004incomplete}, we see that it operates on bipartite graphs
and only needs to deactivate nodes on one of the sides. It seems plausible that some of the known dynamic connectivity
structures are actually asymptotically more efficient on such instances. However, all of them are very complex
(with the result of Holm et al.~\cite{holm2001poly} being the simplest, but definitely not simple), and this is not clear.
Furthermore, recently Fern{\'a}ndez-Baca and Liu~\cite{fernandez2019tree} performed an experimental study of the algorithm of Pe'er et al. for \IDP~\cite{pe2004incomplete} with the aim of assessing the impact of the underlying dynamic graph connectivity data structure on their solution. 
Specifically, they tested the use of the data structure of Holm et al.~\cite{holm2001poly} against a simplified version of the same method, and showed that, in this context, simple data structures perform better than more sophisticated ones with better asymptotic bounds.

\paragraph{Our results and techniques. } We are motivated to look for simple, ad-hoc methods that make use of the properties of the
decremental connectivity as used in \IDP{}.
In this case, the graph is bipartite, and the required updates are vertex deletions from just one of the two sides.
We thus start by describing a simple data structure that dynamically maintains the connected components of a bipartite graph
with $N$ nodes on each side, whilst vertices are removed from one side of the graph.
The starting point for our solution is an application of a version of the sparsification technique of Eppstein et al.~\cite{eppstein1997sparsification}:
we define a hierarchical decomposition of the graph, and maintain a forest representing the connected components
of each subgraph in this decomposition. Recall that the original description of this technique focused on inserting and deleting edges,
while we are interested in deleting nodes (and only from one side of the graph). Therefore, the decomposition needs
to be appropriately tweaked for this particular use case.
This allows us to obtain an extremely simple data structure with $\cO(N^{2}\log N)$ total update time, which we show to imply
an $\cO(nm\log n)$ algorithm for \IDP{}.

The main technical part of our paper refines this solution to shave the logarithmic factor and thus obtain an asymptotically
optimal algorithm. We stress that while Eppstein et al.~\cite{eppstein1997sparsification} did manage to avoid paying
any extra log factors by applying a more complex decomposition of the graph than a complete binary tree (used in the
conference version of their paper), this does not seem to translate to our setting, as we operate on the nodes
instead of the edges.
The high-level idea is to amortize the time spent
on updating the forest representing the components of every subgraph with the progress in disconnecting its nodes, and re-use the results from
the subgraph on the previous level of the decomposition to update the subgraph on the next level.
As a consequence, the \IDP{} problem can be solved in time linear in the input size: 
\begin{restatable}{theorem}{linear}\label{the:nm}
    Given an incomplete matrix $\cA_{n\times m}$, the \IDP{} problem can be solved in time $\cO(nm)$.
\end{restatable}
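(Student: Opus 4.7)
The plan is to follow the blueprint outlined at the end of the introduction: first reduce \IDP{} to a decremental connectivity problem on a bipartite graph in which only vertices on one side are ever deleted (as in Pe'er et al.~\cite{pe2004incomplete}), then design an ad-hoc data structure that exploits this one-sided vertex-deletion setting to achieve $\cO(N^2)$ total update time on a bipartite graph with $N$ nodes per side, and finally translate this back to $\cO(nm)$ for \IDP{}. The reduction part I would lift essentially unchanged from~\cite{pe2004incomplete}: at each step the algorithm picks a connected component of the current bipartite graph, splits it according to the characters it contains, and recurses; the total bookkeeping outside of connectivity maintenance is linear in the matrix size, so all the work is in supporting decremental connectivity.

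For the data structure, I would start from the simpler $\cO(N^2 \log N)$ solution sketched in the introduction. Build a balanced binary tree $T$ whose leaves correspond to the vertices on the deletable side, grouped so that each internal node $v$ of $T$ owns a subset $S_v$ of these vertices (its leaves) together with the induced bipartite subgraph $G_v$ between $S_v$ and the other side. At every node $v$ of $T$, maintain a spanning forest $F_v$ of $G_v$. When a vertex is deleted, walk from its leaf up to the root, updating each $F_v$ along the path; crucially, $F_v$ can be rebuilt from $F_{v_{\mathrm{left}}} \cup F_{v_{\mathrm{right}}}$ rather than from $G_v$ itself, which is the sparsification step. The $\cO(\log N)$ factor then comes from having to recompute a forest on $\cO(N)$ edges at each of the $\cO(\log N)$ levels per deletion.

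To shave the log factor I would amortize the work at each level of $T$ against the number of distinct connected components ever created there. The key observation is that across the entire execution the forest at node $v$ has $|S_v|$ vertices on the deletable side, so at most $\cO(|S_v|)$ disconnection events can occur at $v$; summing $|S_v|$ over one level of $T$ gives $\cO(N)$, and over all levels $\cO(N \log N)$. To actually beat this, I would have the update at $v$ not rebuild $F_v$ from scratch but patch it incrementally by walking only through the edges that cross the partition $(S_{v_{\mathrm{left}}}, S_{v_{\mathrm{right}}})$ near the deleted vertex, reusing the already-updated $F_{v_{\mathrm{left}}}$ and $F_{v_{\mathrm{right}}}$ as a certificate of what is still connected below. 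The amortized cost per deletion should then be bounded by the number of components that actually change, which globally telescopes to $\cO(N^2)$.

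The main obstacle I anticipate is precisely this last step: implementing ``reuse the forests from the children'' without incurring a per-level overhead that would bring back the logarithm. Concretely, when a vertex $u$ on the deletable side is removed, every edge $u\text{-}x$ must be inspected somewhere, and if $x$ was connected to $S_v \setminus \{u\}$ through $u$ alone one must find a replacement edge inside $G_v$; doing this naively at every ancestor costs $\Theta(\log N)$ charges to the same edge. I expect the fix is to process replacement-edge search only at the lowest ancestor where the replacement is actually needed and to charge the search to edges that are ultimately removed from $F_v$ or to the decrease in the number of components of $G_v$, so that each edge pays $\cO(1)$ across the whole algorithm. Once the data structure achieves $\cO(N^2)$ total time with $N = \max(n,m)$, and the reduction contributes only $\cO(nm)$ outside of it, the theorem follows since $\cO(\max(n,m)^2)$ on the padded instance can be tightened to $\cO(nm)$ by running the decomposition only over the smaller side.
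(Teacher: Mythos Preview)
Your reduction and the hierarchical decomposition are exactly what the paper does, and you have correctly located the obstacle: naively rebuilding $F_v$ from $F_{v_{\text{left}}}\cup F_{v_{\text{right}}}$ at every ancestor costs $\Theta(N)$ per level and hence $\Theta(N^2\log N)$ total. Where your proposal falls short is in the mechanism for shaving the logarithm.

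You propose to charge the work at $v$ to ``edges that are ultimately removed from $F_v$'' or to ``the number of components that actually change''. Both budgets are indeed $\cO(N^2)$ globally, but you have not said how to \emph{implement} the update so that its cost is bounded by either quantity. The hard case is when the component $L$ of the deleted character $c$ in $G_v$ does \emph{not} split (or splits only mildly): detecting this may require traversing essentially all of $L$, yet no edge of $F_v$ is removed and the component count barely moves, so there is nothing to charge to. A replacement-edge search ``only at the lowest ancestor where it is needed'' does not help either, because the same non-split can occur at many consecutive ancestors and you must still certify it at each one.

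The paper's fix is a specific ``keep the giant'' trick. After updating $G_{v_i}$, let $K$ split into $K_1,\ldots,K_k$ with $|K_1|$ largest. At the parent $v_{i+1}$, work in the auxiliary graph that is the union of the two star forests coming from $\comp'(v_i)$ and $\comp(u_i)$; implicitly colour all of $K_1$ red, and run a search from one vertex of each $K_j$, $j\ge 2$, aborting the moment a red vertex is touched. This makes the cost at $v_{i+1}$ exactly $\cO(|L|-|K_1|)$, never $\Theta(|L|)$. That quantity is then split into $|L\setminus\bigcup_j K_j|$ (disjoint across levels in a single deletion, so $\cO(N)$ per deletion) and $\sum_{j\ge 2}|K_j|$; for the latter one charges each vertex $1$ whenever its component in $G_v$ halves, giving $\cO(N\cdot d)$ at a node with $2^d$ leaves and $\sum_d (N/2^d)\cdot N\cdot d=\cO(N^2)$ overall. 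The ``search only from the small pieces'' step is the idea your proposal is missing; without it the amortization you sketch does not go through. Your final remark about tightening $\cO(\max(n,m)^2)$ to $\cO(nm)$ is correct and is the content of the paper's reduction lemma (partition the larger side into $\lceil N'/N\rceil$ blocks of size $N$).
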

Under the natural assumption that the input is given as a matrix, this is asymptotically optimal~\cite{gusfield1991efficient}.

\paragraph{Paper organization. }
In Section~\ref{sec:prel} we provide a description of the algorithm of Pe'er et al.~\cite{pe2004incomplete} and a series of preliminary observations.
In Section~\ref{sec:nmlogn} we show a simple and self-contained dynamic connectivity data structure that implies  an $\cO(nm\log n)$ time solution for
the \IDP{} problem for an incomplete matrix $\cA_{n\times m}$.
Finally, in Section~\ref{sec:nm} we present the main result of this paper and describe a dynamic connectivity data structure that implies a linear-time algorithm for \IDP{}.


\section{Preliminaries}\label{sec:prel}
\paragraph{Basic definitions.}Let $G=(V,E)$ be a graph.
The subgraph induced by $V' \subseteq V$ is the graph $G_{V'}=(V',E\cap(V'\times V'))$. 
We say that a forest $F=(V,E')$ represents the connected components of $G=(V,E)$ when the
connected components of $F$ and $G$ are the same (note that we do not require that $E'\subseteq E$).
Throughout the paper, we will use the term node for trees, and vertex for other graphs.
We denote by $S = \{s_1,\ldots,s_n\}$ the set of species and
by $C = \{c_1,\ldots,c_m\}$ the set of characters.
A matrix of character states $\cA_{n\times m}=[a_{ij}]_{n\times m}$, where each entry is a state from $\{0,1,?\}$ 
and the rows correspond to the species, is said to be \emph{incomplete}. The state
$a_{ij}$ of a character $j$ for a species $i$ is one, zero or $?$ depending on whether character $j$ is present, absent or unknown for species $i$. 
A completion $\cB_{n\times m}$ of 
such $\cA_{n\times m}$ is obtained by replacing the $?$ entries of $\cA_{n\times m}$ with either $0$ or $1$: formally, $\cB_{n\times m}$ is a binary matrix with entries $b_{ij}=a_{ij}$ for each $i,j$ such that $a_{ij}\neq \,\,?$.

A phylogenetic rooted tree $\cT$ for a binary matrix $\cB_{n\times m}$ has the $n$ species of $S$ at the leaves, and there is a surjection from
the set of characters $C$ to the internal nodes of $\cT$ such that,
if a character $c_j$ is associated with a 
node $x$, then $s_i$ belongs to the leaf set of the subtree rooted at $x$ if and only if $b_{ij} = 1$. In other words, all and only the species in a subtree associated with a character $c_j$ have the character $c_j$. We say that an incomplete matrix admits a phylogenetic tree if there exists a completion of the matrix that has such a tree. 
The Incomplete Directed Perfect Phylogeny problem  (\IDP{} for short), introduced by Pe'er et al. in~\cite{pe2004incomplete}, asks, given an incomplete matrix $\cA$, to find a phylogenetic tree for $\cA$, or determine that no such tree exists.

For a character $c_j$, the $1$-set (resp. $0$-set and $?$-set) of $c_j$ in an incomplete matrix $\cA$ is the set of species $\{s_i | a_{ij}=1\}$ (resp. $a_{ij}=0$ and $a_{ij}=\,\,?$). For a subset $S' \subseteq S$ of species, 
a character $c$ is $S'$-semiuniversal
in $\cA$ if its 0-set does not intersect $S'$, that is, if $\cA[s,c]\neq 0$ for all $s\in S'$. 
It is convenient to represent the character state matrix as a graph:
the vertices are $V=S\cup C$ and the edges are $S\times C$,  partitioned into $E_1\cup E_?\cup E_0$, with $E_x = \{(s_i, c_j ) | a_{ij} = x\}$ for $x\in\{0,1,?\}$. 
The edges of $E_1,E_{?},E_{0}$ are called \emph{solid}, \emph{optional}, and \emph{forbidden}, respectively.
We denote by $G(\cA)=(S \cup C, E_1)$ the bipartite graph consisting only of the solid edges.

\paragraph{Previous solutions.}The existence of a phylogenetic tree for $\cA$ is linked with the existence, in its graph representation, of a subset of edges with certain properties.
Specifically, Pe'er et al. show that finding a subset $D\subseteq (E_1\cup E_?)$ such that $E_1\subseteq D$
and $(S \cup C,D)$ is $\Sigma$-\emph{free} (where a $\Sigma$ is a path consisting of four edges
induced  by three vertices from $S$ and two vertices from $C$), or determining that such $D$ does not exist, is equivalent to solving the \IDP{} problem for $\cA$.


Pe'er et al. proposed two algorithms for solving the \IDP{} problem, both working on the graph representation of $\cA$ and relying on some graph dynamic connectivity data structure, the main difference between the two being the data structure they use. 
For ease of presentation, in what follows we will only consider the algorithm they refer to as Alg\_A. 
The algorithm relies on the following key properties: if an incomplete matrix $\cA$ admits a phylogenetic tree,
and $c$ is a $S$-semiuniversal character (meaning that there are no $0$s in its column),
then the incomplete matrix obtained by setting to $1$ all of the entries of column $c$ still admits a phylogenetic tree.
Moreover, given a partition $(K_1,\ldots,K_r)$ of $S\cup C$ where each $K_i$ is a connected component of $G(\cA)$, the incomplete matrix obtained by setting to $0$ all entries corresponding to the edges between $K_i$ and $K_j$, for $i\neq j$, still admits a phylogenetic tree.
Then, there is no interaction between the species and characters belonging to different connected components, and
the whole reasoning can be repeated on each such component separately.

We denote by $S(K)$ and $C(K)$ the set of species and characters, respectively, of a connected component $K$ of $G(\cA)$;
$\cA|_K$ denotes the submatrix of $\cA$ corresponding to the species and characters in $K$.
\emph{Deactivating} a character $c$ in $G(\cA)$ consists in deleting $c$ together with all its incident edges.
At a high level, Alg\_A works as follows. 
At each step, for each connected component $K_i$ of $G(\cA)$, it computes the $S(K_i)$-semiuniversal characters. 
If, for some $K_i$, no $S(K_i)$-semiuniversal character exists, it can be proven that, for any $D\subseteq  (E_{1}\cup E_{?})$
such that $E_{1}\subset D$, the graph $(S\cup C,D)$ is not $\Sigma$-free,
therefore the process halts and reports that $\cA$ does not admit a phylogenetic tree.
Otherwise, it sets to $1$ all of the entries of $\cA|_{K_i}$ corresponding to the $S(K_i)$-semiuniversal characters, and sets to $0$ the entries of $\cA$ between vertices that lay in different connected components. It then deactivates all of the $S(K_i)$-semiuniversal characters and updates the connected components of $G(\cA)$ using some dynamic connectivity data structure.

Algorithm~\ref{alg:peer} summarizes the high-level structure of Alg\_A:
for the sake of clarity, we only included the steps that compute the information needed for determining whether $\cA$ has a phylogenetic tree, and we left out the operations that actually reconstruct the tree. A complete pseudocode and a proof of correctness of the algorithms can be found in~\cite{pe2004incomplete}.

\begin{algorithm}
\setlength{\interspacetitleruled}{0pt}%
\setlength{\algotitleheightrule}{0pt}%
\While{\textnormal{there is at least one character in} $G(\cA)$}{
    Find the connected components of $G(\cA)$\;\label{line:find}
    \For{\textnormal{\textbf{each} connected component $K_i$ of $G(\cA)$ with at least one character}}{
        Compute the set $U$ of all characters in $K_i$ which are $S(K_i)$-semiuniversal in $\cA$\;\label{line:semi}
        \lIf{$U=\emptyset$}{\Return{\textnormal{\texttt{FALSE}}}}
        Deactivate every $c\in U$\;\label{line:deactivate}
    }
    \Return \texttt{TRUE}
}
\caption{The high-level structure of Alg\_A~\cite{pe2004incomplete}.}
\label{alg:peer}
\end{algorithm}
 
\paragraph{Preliminary results.}Our goal is to improve Alg\_A by optimizing its bottleneck, that is maintaining
the connected components of $G(\cA)$. We will represent the connected components
of a bipartite graph $G$ using the following lemma, and call the resulting representation a \emph{list-representation} of $G$.

\begin{lemma}\label{lem:cc}
The connected components of a bipartite graph $G=(S\cup C,E)$ can be represented in $\cO(|S|+|C|)$ space
so that, given a vertex, we can access its component, including the size and a pointer to the list of species and characters
inside, in constant time, and move a vertex to another component (or remove it from the graph) also in constant time.
\end{lemma}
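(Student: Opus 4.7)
The plan is to store, for each current component $K$ of the bipartite graph, one record that holds: the total size of $K$, the number of species in $K$, the number of characters in $K$, and the heads of two doubly-linked lists enumerating respectively the species of $S(K)$ and the characters of $C(K)$. Every vertex $v \in S \cup C$ carries a pointer $\texttt{comp}(v)$ to the record of the component it currently belongs to, together with $\texttt{prev}/\texttt{next}$ pointers that place it inside the appropriate one of the two lists. Because each vertex contributes $\cO(1)$ data and the number of component records that ever exist is bounded by $|S|+|C|$, the overall space is $\cO(|S|+|C|)$.

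Access to the component of a vertex $v$, to its size and to the lists of its species and characters is then one pointer dereference away, hence $\cO(1)$. To move $v$ from its current component to another (already existing) component $K'$, I would first unlink $v$ from the doubly-linked list hanging off $\texttt{comp}(v)$ by rewiring the two neighbours identified by its $\texttt{prev}$ and $\texttt{next}$ pointers, then splice $v$ at the head of the list of $K'$ that matches $v$'s side of the bipartition, decrement the relevant counters in $\texttt{comp}(v)$, increment the matching counters in $K'$, and finally overwrite $\texttt{comp}(v) := K'$. Removing $v$ from the graph altogether is the same routine skipping the splice and the increment on $K'$.

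There is essentially no obstacle here: each operation is a constant number of pointer assignments, and the invariant that $\texttt{comp}(v)$ points to the record of the component containing $v$ is preserved by construction, since that field is overwritten as the last step of every move. Records of components that become empty may be kept in memory or lazily freed without affecting correctness or the space bound, which completes the construction promised by the lemma.
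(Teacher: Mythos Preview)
Your construction is correct and matches the paper's own proof essentially line for line: per-component records holding size information and heads of two doubly-linked lists (species and characters), with each vertex storing a pointer to its component record and its position in the relevant list, so that access, move, and removal are all a bounded number of pointer updates. The only cosmetic difference is that the paper keeps the vertex-to-component pointers in an array indexed by vertices rather than as a field on the vertex, and additionally threads the component records themselves into a doubly-linked list; neither change affects the argument.
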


\begin{proof}
Each component of $G$ is represented by a doubly-linked list of its vertices
(more precisely, a list of species and a list of characters), and also stores the size of the list.
An array of length $n+m$, indexed by the vertices of $G$, stores a pointer from each vertex to its component and
another pointer from each vertex to its position in the list of that component.
The components are, in turn, organised in a doubly-linked list.
Such representation takes space linear in the number of vertices
and allows us to access all the required information in constant time.
Further, removing or moving a vertex to another component takes constant time.
\end{proof}

Given a list-representation of $G$, we represent its connected components with
another graph $F=(V,E')$ consisting of \emph{rooted stars}~\cite{ShiloachV82} as follows.
For each component $K$, we define the central vertex
$v\in K$ to be the first vertex on the list of $K$. Then, we add an edge $(u,v)$ to $E'$, for any $u\in K$ with $u\neq v$.
This construction can be implemented in $\cO(|V|)$ time. Observe that we
only guarantee that the connected components of $G$ and $F$ are the same,
but $E'$ is not required to only consist of the edges of $G$.
We can use the list-representation of $G$ to simulate access to
the adjacency lists of $F$ without constructing it explicitly, as stated by the following lemma.

\begin{lemma}\label{lem:simulation}
Given a bipartite graph $G=(S\cup C,E)$ and a list-representation of $G$, the access to the adjacency lists of a star forest $F$ representing the connected components of $G$ can be simulated in constant time without constructing $F$ explicitly.
\end{lemma}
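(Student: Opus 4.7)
The plan is to exploit the list-representation of $G$ furnished by Lemma~\ref{lem:cc} to answer adjacency queries on $F$ without ever materialising it. First I would fix, once and for all, a concrete convention for the ``first vertex on the list of $K$'' that identifies the star centre of a component $K$---say, the head of the species list of $K$, falling back to the head of the character list when $K$ contains no species---so that the centre $c_K$ of any component can be retrieved in $\cO(1)$ time via the component record and its head pointer.

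Given a query vertex $v$, the simulation then splits into two easy cases. I would follow the pointer from $v$ to its component record (constant time by Lemma~\ref{lem:cc}), read off the centre $c_K$, and compare it with $v$. If $v\neq c_K$, then by the construction of $F$ the vertex $v$ is a leaf of its star, so its adjacency list contains the single element $c_K$ and is returned in $\cO(1)$. If $v=c_K$, then the neighbours of $v$ in $F$ are exactly the other vertices of $K$, and traversing this adjacency list is implemented by walking the doubly-linked species and character lists of $K$ stored in the component record and skipping $v$ itself; by the properties of a doubly-linked list, each ``next neighbour'' step costs $\cO(1)$.

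The main (and really the only) subtlety is to read the statement ``in constant time'' in the standard sense of constant time per operation of the adjacency-list interface---initialising the iterator at a vertex, advancing to the next neighbour, and testing for the end of the list---rather than as producing the entire list in $\cO(1)$. Each of these operations reduces to dereferencing one pointer in the representation of Lemma~\ref{lem:cc} or to comparing two vertex identifiers, so the simulation meets the claimed bound and no edges of $F$ need to be stored explicitly.
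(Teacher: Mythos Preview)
Your proposal is correct and follows essentially the same approach as the paper: retrieve the component of $v$, take its first vertex as the star centre, and then either return that single centre (when $v$ is a leaf) or iterate over the component's stored vertex lists (when $v$ is the centre). You are slightly more explicit than the paper about the choice of centre, about skipping $v$ itself, and about interpreting ``constant time'' per adjacency-list operation, but the underlying argument is identical.
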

\begin{proof}
To access the adjacency list of a vertex $v$ we first look up its component $K$ and retrieve the first vertex $u$ on the
list of $K$. By Lemma~\ref{lem:cc}, this operation requires constant time. If $u=v$, then the adjacency list of $v$ is the list of vertices of $K$ stored in the list-representation of $G$.
Otherwise, the adjacency list of $v$ consists only of a single vertex $u$. 
\end{proof}

We are interested in solving the following special case of decremental connectivity:

\Dynproblem{$(N_{\ell},N_{r})$-\DC{}}
{a bipartite graph $G=(S\cup C,E)$ with $N_{\ell}=|S|$ and $N_{r}=|C|$.}
{deactivate a character $c\in C$.}
{return the connected components of the subgraph induced by $S$ and the remaining characters.}

When analysing the complexity of $(N_{\ell},N_{r})$-\DC{}, we allow preprocessing the input graph $G$
in $\cO(N_{\ell}N_{r})$ time, and assume that all characters are eventually deactivated when analysing the total
update time. We can of course deactivate multiple characters at once by deactivating them one-by-one.

The overall time complexity of Algorithm~\ref{alg:peer} depends on the complexity of $(N_{\ell},N_{r})$-\DC{} as follows.

\begin{lemma}\label{lem:bottleneck}
Consider an $n\times m$ incomplete matrix $\cA$.
If the $(n,m)$-\DC{} problem can be solved in $f(n,m)$ total update time
and $g(n,m)$ query time, then
the \IDP{} problem can be solved for $\cA$ in time $\cO(nm+f(n,m)+\min\{n,m\}\cdot g(n,m))$.
\end{lemma}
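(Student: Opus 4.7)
My plan is to implement Algorithm~\ref{alg:peer} so that (a) its outer while-loop iterates $\cO(\min\{n,m\})$ times, and (b) finding semiuniversal characters is done in amortized $\cO(nm)$ total work outside the DC structure.

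For (a), I would argue that the number of iterations $T$ is bounded by the depth of the phylogenetic tree associated with $\cA$ (up to an additive constant for a trailing failing iteration): the semiuniversal characters deactivated in iteration $t$ are exactly those labelling internal nodes at depth $t-1$. A rooted phylogenetic tree on $n$ leaves with branching at least $2$ at every internal node has at most $n-1$ internal nodes and depth at most $n-1$; by the surjection from $C$ onto the internal nodes, the same tree also has at most $m$ internal nodes. Hence $T \leq \min\{n-1,m\} + \cO(1) = \cO(\min\{n,m\})$. If the algorithm reports failure, the same bound still applies to the successful prefix of iterations, since those deactivated characters would admit a partial phylogenetic tree with the same depth.

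For (b), during the initial $\cO(nm)$-time scan of $\cA$, I precompute for each character $c$ the list $Z(c)$ of species where $c$ has a $0$-entry, together with a pointer $p(c)$ initialized to the first position of $Z(c)$. I would maintain the invariant that every species stored before position $p(c)$ in $Z(c)$ has already been separated from the connected component containing $c$; this is consistent because, in the decremental setting, once a species leaves a component it never returns. In each iteration, for every still-alive character $c$, I advance $p(c)$ past any species $Z(c)[p(c)]$ that is currently not in the same component as $c$; by Lemma~\ref{lem:cc} each such check takes constant time. Character $c$ is declared semiuniversal exactly when $p(c)$ has walked off the end of $Z(c)$, i.e., no species in $Z(c)$ shares $c$'s component. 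Since each position in each $Z(c)$ is advanced past at most once over the whole execution, pointer advances sum to $\cO\bigl(\sum_c |Z(c)|\bigr) \leq \cO(nm)$; the per-iteration overhead of inspecting each alive character is $\cO(m)$, summing to $\cO(mT) = \cO(nm)$ across all iterations.

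Assembling the pieces: preprocessing costs $\cO(nm)$; the pointer-based semiuniversal tracking and the per-iteration character inspections sum to $\cO(nm)$; the $T = \cO(\min\{n,m\})$ DC queries cost $\cO(\min\{n,m\} \cdot g(n,m))$; and the total DC update cost is $f(n,m)$ by assumption. Together these yield the claimed bound $\cO(nm + f(n,m) + \min\{n,m\} \cdot g(n,m))$. The main subtlety I expect is in (a): carefully tying the number of iterations of Alg\_A to the depth of a phylogenetic tree (including the failing case) and justifying the depth bound from the structural assumptions on phylogenetic trees; once this is done, the amortized pointer scheme in (b) is straightforward given the decremental nature of the connectivity.
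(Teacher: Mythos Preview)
Your part (b) is correct and is a clean alternative to the paper's approach. The paper maintains, for each character $c$, the \emph{count} of solid-or-optional edges between $c$ and species in its current component, updating these counts whenever a component splits by iterating over all ordered pairs of new sub-components; each pair $(s,c)$ is charged at most once, giving the same $\cO(nm)$ total. Your pointer scan over $Z(c)$ achieves the same bound with arguably less bookkeeping, and the decremental monotonicity you invoke is exactly what makes the invariant on $p(c)$ hold.

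Part (a), however, has a genuine gap. Your bound on the number of iterations is derived from the depth of a phylogenetic tree for $\cA$, but such a tree need not exist: the whole point of \IDP{} is to decide whether one does. Your sentence about the failing case (``those deactivated characters would admit a partial phylogenetic tree with the same depth'') is not a proof; you have not defined what this partial tree is, why its internal nodes have branching at least~2, or why its depth bounds the number of completed iterations. You also assume branching at least~2 at every internal node, which the paper's definition of phylogenetic tree does not require, so even in the success case the depth~$\leq n-1$ step needs an extra argument. The paper avoids all of this by arguing directly on the algorithm's run, regardless of the outcome: in every iteration except possibly the last, (i) at least one character is deactivated (otherwise $U=\emptyset$ and the algorithm halts), and (ii) at least two species cease to share a component (otherwise some $S(K_i)$ is unchanged, so in the next iteration that component has no semiuniversal character and the algorithm halts). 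Claim (i) gives $T\le m+\cO(1)$ and claim (ii) gives $T\le n+\cO(1)$, since the partition of $S$ into components can strictly refine at most $n-1$ times. This is both simpler and complete; your part (b) then goes through unchanged once (a) is replaced by this argument.
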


\begin{proof}
There are three nontrivial steps in every iteration of the while loop:
finding the connected components in line~\ref{line:find}, computing the semiuniversal characters
of every connected component in line~\ref{line:semi}, and finally deactivating characters in line~\ref{line:deactivate}.
Every character is deactivated at most once, so the overall complexity of all deactivations is
$\cO(f(n,m))$.
We claim that in every iteration of the while loop, except possibly for the very last, (1) at least
one character is deactivated, and (2) there exist two species that cease to belong to
the same connected component. (1) is immediate, as otherwise we have a connected component
$K_{i}$ with no $S(K_{i})$-semiuniversal characters and the algorithm terminates.
To prove (2), assume otherwise, then we have a connected component $K_{i}$ such that
$S(K_{i})$ does not change after deactivating all $S(K_{i})$-semiuniversal characters. But then
in the next iteration the set of $S(K_{i})$-semiuniversal characters is empty and the algorithm
terminates. (1) and (2) together imply that the number of iterations is bounded by $\min\{n,m\}$.
The overall complexity of finding the connected components is thus $\cO(\min\{n,m\}\cdot g(n,m))$.

It remains to bound the overall complexity of computing the semiuniversal characters by $\cO(nm)$. 
This has been implicitly done in~\cite[proof of Theorem 12]{pe2004incomplete}, but we provide
a full explanation for completeness.
For every character $c\in C$, we maintain the count of solid and optional edges connecting
$c$ (in the graph representation of $\cA$) with the species that belong to its same connected component (of $G(\cA)$).
Assuming that we can indeed maintain these counts, in every iteration all the semiuniversal
characters can be generated in $\cO(m)$ time, so in $\cO(\min\{n,m\}\cdot m)=\cO(nm)$ overall time.

To update the counts, consider a connected component $K$ that, after deactivating some characters,
is split into possibly multiple smaller components $K_{1},K_{2},\ldots,K_{k}$.
Note that we can indeed gather such information in $\cO(n+m)$ time, assuming access to
a representation of the connected components before and after the deactivation.
We assume that the connected components are maintained with the list-representation described in Lemma~\ref{lem:cc},
and therefore we can access a list of the vertices in every $K_{i}$. Then, we consider every
pair $i,j\in \{1,2,\ldots,k\}$ such that $i\neq j$, $C(K_{i})\neq \emptyset$ and
$S(K_{j})\neq \emptyset$. We iterate over every $c\in K_{i}$ and $s\in K_{j}$,
and if $(s,c)$ is an edge in the graph of $\cA$ (observe that it cannot be a solid edge, as $K_{i}$ and $K_{j}$ are distinct
connected components) we decrease the count of $c$.
By first preparing lists of components $K_{i}$ such that $C(K_{i})\neq \emptyset$ and $S(K_{i})\neq \emptyset$,
this can be implemented in time bounded by the number of considered possible edges $(s,c)$,
and every such possible edge is considered at most once during the whole execution.
Therefore, the overall complexity of maintaining the counts is $\cO(nm)$.
Additionally, we need $\cO(nm)$ time to initialise the $(n,m)$-\DC{} structure.
\end{proof}

Before we proceed to design an efficient solution for the $(N_{\ell},N_{r})$-\DC{} problem, we first show that it is in fact
enough to consider the $(N,N)$-\DC{} problem.

\begin{lemma}\label{lem:red}
Assume that the $(N,N)$-\DC{} problem can be solved in $f(N)$ total update time and $g(N)$ query time.
Then, for any $N' \geq N$, both the $(N,N')$-\DC{} problem and the $(N',N)$-\DC{} problem
can be solved in $\cO(N'/N\cdot f(N))$ total update time and 
$\cO(N'/N\cdot g(N))$ query time.
\end{lemma}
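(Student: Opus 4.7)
The plan is to reduce both unbalanced problems to $k := \lceil N'/N \rceil$ copies of the balanced $(N,N)$-\DC{} problem, by slicing the larger side of the bipartite graph into chunks of size at most $N$. For $(N,N')$-\DC{} with $|S|=N$ and $|C|=N'$, I would partition $C$ arbitrarily into $k$ blocks $C_1,\ldots,C_k$ of size at most $N$ and instantiate an $(N,N)$-\DC{} structure on each induced subgraph $G_i := G[S \cup C_i]$; the $(N',N)$-\DC{} case is symmetric, with $S$ sliced into blocks $S_1,\ldots,S_k$ and an $(N,N)$-\DC{} structure built on each $G_i := G[S_i \cup C]$. Setting up the $k$ instances costs $\cO(kN^2)=\cO(NN')$, which fits within the preprocessing budget allowed for both problems.

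Updates are the easy part: when a character $c$ is deactivated, the operation is routed to every instance whose character side contains $c$. In the $(N,N')$ case this is the unique $i$ with $c \in C_i$, whereas in the $(N',N)$ case, where $C$ is shared, the deactivation is forwarded to all $k$ instances. Either way each instance undergoes at most $N$ deactivations over its lifetime, so its cumulative update cost is bounded by $f(N)$; summing over the $k$ instances yields total update time $\cO(k\cdot f(N)) = \cO(N'/N\cdot f(N))$, as claimed.

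The step requiring more care is query answering. I would first query each of the $k$ instances in $g(N)$ time apiece to obtain the partition of $V(G_i)$ into its $G_i$-components, and then combine these local partitions into the global partition of $S\cup C$. To do so, build an auxiliary bipartite graph $H$ whose vertex set consists of $S\cup C$ together with one dummy vertex $v_{i,K}$ for every component $K$ returned by instance $i$, with an edge from every $u\in K$ to $v_{i,K}$. Any path in $G$ can be broken at its vertices on the shared side so that each piece stays inside a single $G_i$, which shows that two vertices of $S\cup C$ lie in the same $G$-component if and only if they lie in the same $H$-component. A BFS on $H$ then recovers the global partition in $\cO(|V(H)|+|E(H)|)=\cO(N+N')$ time. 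Since each instance query already outputs a partition of a $\Theta(N)$-vertex graph we must have $g(N)=\Omega(N)$, so the merging cost is absorbed into $\cO(N'/N \cdot g(N))$, completing the bound.
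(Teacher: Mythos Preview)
Your proof is correct and follows essentially the same approach as the paper: partition the larger side into $\lceil N'/N\rceil$ blocks, route each deactivation to the relevant instance(s), and on a query merge the per-instance components via an auxiliary graph of size $\cO(N')$. Your use of dummy component-vertices in place of the paper's star forests is cosmetic, and your argument that the $\cO(N')$ merging cost is absorbed by $g(N)=\Omega(N)$ is in fact tidier than the paper's own justification.
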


\begin{proof}
We first consider the $(N,N')$-\DC{} problem.
We create $\lceil N'/N\rceil $ instances of $(N,N)$-\DC{} by partitioning $C$ into groups of $N$ vertices (except for
the last group that might be smaller). In each instance we have the same set of species $S$. Deactivating
a character $c\in C$ is implemented by deactivating it in the corresponding instance of $(N,N)$-\DC{}.
Overall, this takes $\cO(N'/N\cdot f(N))$ time.
Upon a query, we query all the instances in $\cO(N'/N\cdot g(N))$ time. The output of each
instance can be converted to a star forest representing the connected components in $\cO(N)$ time.
We take the union of all these forests to obtain an auxiliary graph on at most $\lceil N'/N\rceil \cdot (N-1) =\cO(N')$ edges,
and find its connected components in $\cO(N')$ time.
Assuming that $f(N) \geq N$, this takes $\cO(N'/N\cdot f(N))$ overall time and gives us the connected
components of the whole graph.

Now we consider the $(N',N)$-\DC{} problem.
We create $\lceil N'/N\rceil $ instances of $(N,N)$-\DC{} by partitioning $S$ into groups of $N$ vertices,
and in each instance we have the same set of characters $C$. Thus, deactivating a character $c\in C$
is implemented by deactivating it in every instance. Overall, this takes $\cO(N'/N\cdot f(N))$ time.
A query is implemented exactly as above by querying all the instances and combining the results
in $\cO(N'/N\cdot f(N))$ time.
\end{proof}

\section{$(N,N)$-\DC{} in $\cO(N^{2}\log N)$ Total Update Time and $\cO(N)$ per Query}
\label{sec:nmlogn}

Our solution for the $(N,N)$-\DC{} problem is based on a hierarchical decomposition of $G$ into multiple smaller
subgraphs as in the sparsification technique of Eppstein et al.~\cite{eppstein1997sparsification} (as
mentioned in the introduction, appropriately tweaked for our use case).
The decomposition is represented by a complete binary tree $\DT(G)$ of depth $\log N$. 
We identify the leaves of $\DT(G)$ with the characters $C$. 
Each node $v$ corresponds to the set of characters $C_{v}$ identified with the leaves in the
subtree of $v$, and is responsible for the subgraph $G_{v}$ of $G$ induced by $C_{v}$ and the whole set of species $S$.
Thus, the root is responsible for the whole $G$, see Figure~\ref{fig:dtree}.
Each node $v$ explicitly maintains a list-representation of the connected components of $G_{v}$, denoted $\comp(v)$.
We stress that, while $\comp(v)$ is explicitly maintained, we do not explicitly store $G_{v}$ at
every node $v$.
The initial preprocessing required to construct $\DT(G)$ together with $\comp(v)$ for every
node $v$, given $G$, takes $\cO(B^{2})$ time by the following argument. First, we construct $\comp(v)$
for every leaf $c$. This can be done in $\cO(B)$ time per leaf by simply iterating the neighbours
of $c$ in $G$. Second, we proceed bottom-up and compute $\comp(v)$ for every inner node
$v$ in $\cO(B)$ time using the following lemma.

\begin{lemma}
\label{lem:union}
Let $v$ be an inner node of $\DT(G)$, and $v_{\ell},v_{r}$ be its children.
Given $\comp(v_{\ell})$ and $\comp(v_{r})$ we can compute $\comp(v)$ in $\cO(B)$ time.
\end{lemma}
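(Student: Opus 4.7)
The plan is to take advantage of the star-forest view from Lemma~\ref{lem:simulation}. From $\comp(v_{\ell})$ I obtain implicit constant-time access to the adjacency lists of a star forest $F_{\ell}$ on vertex set $S \cup C_{v_{\ell}}$ whose connected components coincide with those of $G_{v_{\ell}}$, and analogously a star forest $F_{r}$ on $S \cup C_{v_{r}}$ for $G_{v_{r}}$. Let $H$ be the auxiliary graph on $S \cup C_{v}$ obtained by overlaying $F_{\ell}$ and $F_{r}$ along their common species set $S$. The key claim is that the connected components of $H$ coincide with those of $G_{v}$; granting this, I compute them by a single BFS/DFS over $H$ and then assemble $\comp(v)$ as in Lemma~\ref{lem:cc}.

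The claim is verified by a path-replacement argument in both directions. Any path in $H$ is a concatenation of edges each of which belongs to $F_{\ell}$ or $F_{r}$; an $F_{\ell}$-edge connects two vertices that by construction share a component of $G_{v_{\ell}}$ and hence can be replaced by a path in $G_{v_{\ell}} \subseteq G_{v}$, and symmetrically for $F_{r}$, so the endpoints of the original path are connected in $G_{v}$. Conversely, a path in $G_{v}$ decomposes into maximal subpaths each lying entirely in $G_{v_{\ell}}$ or entirely in $G_{v_{r}}$; the endpoints of every such subpath share a component of $G_{v_{\ell}}$ or $G_{v_{r}}$, hence a star of $F_{\ell}$ or $F_{r}$, so the concatenation of these stars witnesses a path in $H$.

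For the running time, the BFS enumerates the neighbourhood of each visited vertex in both $F_{\ell}$ and $F_{r}$, and each such neighbourhood query costs constant time per reported neighbour by Lemma~\ref{lem:simulation} (the vertex is either the centre of its star, in which case its neighbours are the remaining members of the stored component list, or a non-centre with a single neighbour). Since $F_{\ell}$ and $F_{r}$ are star forests, the number of edges explored is at most $(|S|+|C_{v_{\ell}}|-1)+(|S|+|C_{v_{r}}|-1)=\cO(B)$, and $\comp(v)$ is assembled from the component labels in an additional $\cO(B)$ time by Lemma~\ref{lem:cc}. The only real obstacle is the correctness of the overlay claim; once it is in hand, the linear running time follows purely from the star-forest structure guaranteed by Lemma~\ref{lem:simulation}, together with the care to never materialise $F_{\ell}$ or $F_{r}$ explicitly.
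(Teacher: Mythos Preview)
Your argument is correct and follows essentially the same route as the paper: form the union of the two star forests representing $\comp(v_{\ell})$ and $\comp(v_{r})$, compute its connected components in linear time, and store the result as $\comp(v)$. The only cosmetic difference is that the paper builds the two star forests explicitly in $\cO(B)$ time rather than accessing them implicitly via Lemma~\ref{lem:simulation}, and it leaves the equality of components between the overlay and $G_{v}$ implicit, whereas you spell out the path-replacement justification.
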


\begin{proof}
We construct star forests representing the connected components of $\comp(v_{\ell})$ and $\comp(v_{r})$ in $\cO(B)$
time and take their union. Then we find the connected components of this union in $\cO(B)$ time
and save them as $\comp(v)$.
\end{proof}

\begin{figure}[t]
    \centering
   \includegraphics[width=.7\linewidth]{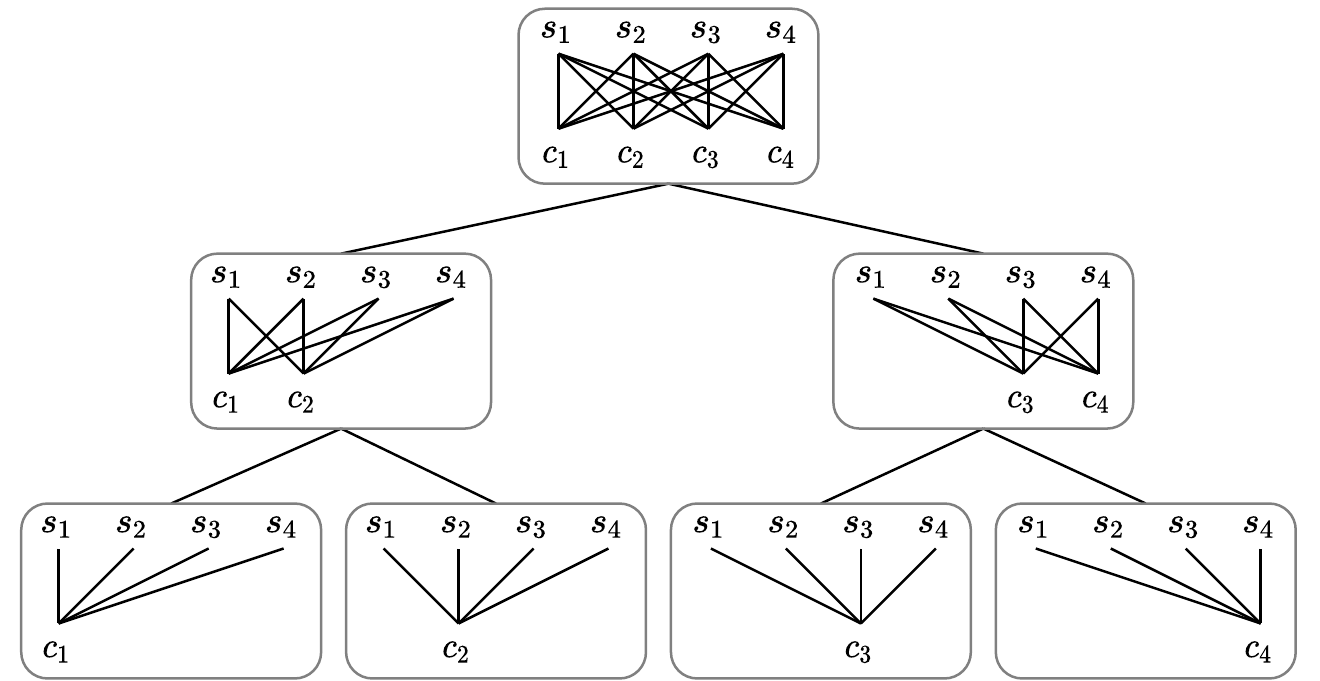}
    \caption{The decomposition tree of $K_{4,4}$.}
    \label{fig:dtree}
\end{figure}

We proceed to explain how to solve the $(N,N)$-\DC{} problem in $\cO(N\log N)$ time per update and
$\cO(N)$ time per query. The query simply returns $\comp(r)$, where $r$ is the root of $\DT(G)$.
The update is implemented as follows. Deactivating a character $c$ possibly affects $\comp(v)$ for
all ancestors $v$ of the leaf corresponding to $c$. In particular, $\comp(c)$ becomes a collection of
isolated nodes and can be recomputed in $\cO(1+|S|)=\cO(N)$ time.
We iterate over all proper ancestors $v$, starting from the parent of $c$. For each such
$v$, let $v_{\ell}$ and $v_{r}$ denote its left and right child, respectively. We can assume that
$\comp(v_{\ell})$ and $\comp(v_{r})$ have been already correctly updated. We compute $\comp(v)$
from $\comp(v_{\ell})$ and $\comp(v_{r})$ by applying Lemma~\ref{lem:union} in $\cO(N)$ time.
When summed over all the ancestors, the update time becomes $\cO(N\log N)$, so $\cO(N^{2}\log N)$
over all deactivations.

By Lemmas~\ref{lem:bottleneck} and~\ref{lem:red}, this implies that, given an incomplete matrix $\cA_{n\times m}$, the \IDP{} problem can be solved in time $\cO(nm\log \min\{n,m\})$ without using any dynamic connectivity data structure as a blackbox.

\section{$(N,N)$-\DC{} in $\cO(N^{2})$ Total Update Time and $\cO(N)$ Time per Query}
\label{sec:nm}

Our faster solution is also based on a hierarchical decomposition $\DT(G)$ of $G$. As before, every
node $v$ stores $\comp(v)$, so a query simply returns $\comp(r)$. The difference is in implementing
an update. We observe that, if for some ancestor $v$ of the leaf corresponding to $c$, the only change
to $\comp(v)$ is removing $c$ from its connected component, then this also holds for all of the
subsequent ancestors and they can be updated in constant time each.
This suggests that we should try to amortise the cost of an update with the progress in splitting
$\comp(v)$ into smaller components.

We will need to compare the situation before and after the update, and so introduce the following
notation. A node $v$ of $\DT(G)$ is responsible for the subgraph $G_{v}$ before the update
and for the subgraph $G'_{v}$ after the update; $\comp(v)$ and $\comp'(v)$ denote
the connected components of $G_{v}$ and $G'_{v}$, respectively. The crucial observation is
that $\comp'(v)$ is obtained from $\comp(v)$ by removing $c$ from its connected
component and, possibly, splitting this connected component into multiple smaller ones, while leaving the others intact.

Deactivating a character $c$ begins with updating naively $\comp(c)$ in $\cO(N)$ time. Then we iterate
over the ancestors of $c$ in $\DT(G)$. Let $v_{i+1}$ be the currently considered ancestor, $v_{i}$
the ancestor considered in the previous iteration, and $u_{i}$ be the other child of $v_{i+1}$ (sibling of $v_{i}$).
Let the component of $G_{v_{i}}$ containing $c$ be $K$. As observed above, the components of
$G'_{v_{i}}$ are the same as the components of $G_{v_{i}}$, except that $K$ is replaced by possibly
multiple components $K_{1},K_{2},\ldots,K_{k}$, where $\bigcup_{j=1}^{k}K_{j} = K\setminus\{c\}$.
If $k=1$ then we trivially remove $c$ from its connected component in every $G_{v_{j}}$, for $j=i+1,i+2,\ldots$
and terminate the update, so we can assume that $k\geq 2$. We further assume that, after having updated
the components of $G_{v_{i}}$, we obtained a list of pointers to $K_{1},K_{2},\ldots,K_{k}$.
Let $L$ be the connected component of $c$ in $G_{v_{i+1}}$, with $K\subseteq L$ because the subgraphs
are monotone with respect to inclusion on any leaf-to-root path. Now the goal is to transform $G_{v_{i+1}}$ into $G'_{v_{i+1}}$, to
update its components (using $\comp'(v_{i})$ and $\comp(u_{i})$), and additionally to obtain a list of pointers
to the components obtained by splitting $L$. See Figure~\ref{fig:components} for an illustration. 

\begin{figure}[h]
\begin{center}
\includegraphics[width=0.6\textwidth]{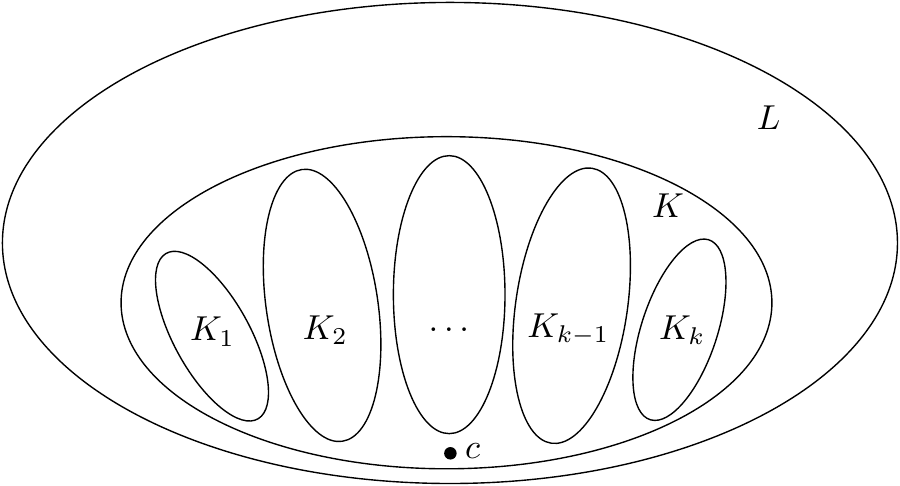}
\end{center}
\caption{After having removed $c$ from $K$ to obtain $K_{1},K_{2},\ldots,K_{k}$, we want
to remove $c$ from $L$.}
\label{fig:components}
\end{figure}

We start by initialising $G'_{v_{i+1}}$ to be $G_{v_{i+1}}$, and by removing $c$ from $L$.
As in the proof of Lemma~\ref{lem:union}, we will work with an auxiliary graph consisting of the union
of two star forests representing the connected components of $G'_{v_{i}}$ and $G_{u_{i}}$, respectively.
However, instead of explicitly constructing these forests, we simulate access to the adjacency lists of
every vertex in both forests using $\comp'(v_{i})$ and $\comp(u_{i})$, as explained in the proof of Lemma~\ref{lem:simulation}.
In turn, this allows us to simulate access to the adjacency list of every vertex in the auxiliary graph.
See Figure~\ref{fig:update} for an example of the auxiliary graph.

By renaming the components we can assume that $|K_{1}| \geq |K_{2}|,|
K_{3}|,\ldots,|K_{k}|$. 
We will visit the vertices of $L$ in order to determine the new connected components after the removal of $c$: when doing so, 
we will use different colours to represent vertices whose new connected component contains $K_1$ (red), vertices whose new component is different from the one of $K_1$ (black) and vertices whose new component is still unknown (white).
Initially, the vertices of $K_{1}$ are red and all of the other vertices of the auxiliary graph are white. This initialisation
is done implicitly, meaning that we will assume that all the vertices of $K_1$ are red and the rest are white without explicitly assigning the colours,
and whenever retrieving the colour of a node $u$ we first check if $u\in K_{1}$, and if so assume that it is red.
This allows us to implement the initialisation in constant time instead of $\cO(N)$ time. 
We will perform the visit of $L$ by running the following search procedure from an arbitrarily chosen vertex of each $K_{j}$, for $j=2,3,\ldots,k$.

The search procedure run from a vertex $x$ first checks if $x$ is white,
and immediately terminates otherwise. Then, it starts visiting the vertices
of the connected component of $x$ in the auxiliary graph: at any moment, each vertex in such component is either white or red.
As soon as the search encounters a red vertex, it is terminated and all the vertices visited in the current
invocation are explicitly coloured red. Otherwise, the procedure has identified a new connected component
$K'$ of $G'_{v_{i+1}}$. The vertices of $K'$ are removed from $L$, all vertices of $K'$
are coloured black in the auxiliary graph, and a new component $K'$ of $G'_{v_{i+1}}$ is created in $\cO(|K'|)$ time.
Inspect Figure~\ref{fig:update} for an example.

\begin{figure}[b]
    \centering
   \includegraphics[width=.6\linewidth]{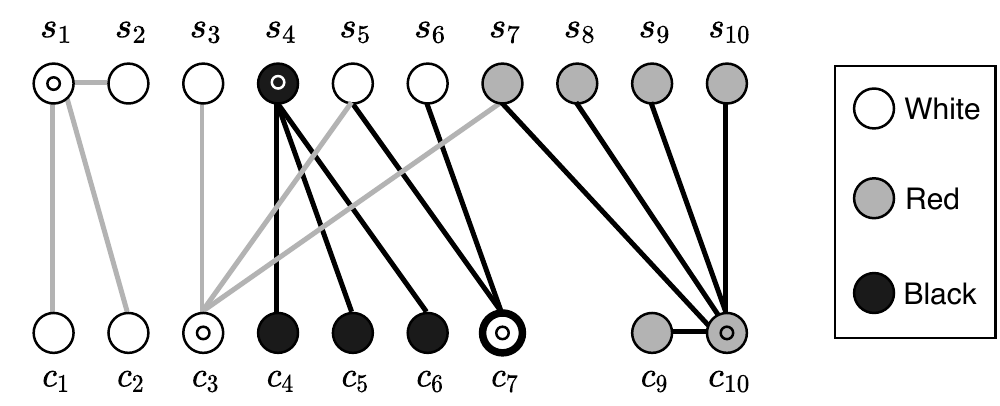}
    \caption{The auxiliary graph implicitly constructed for a node $v_{i+1}$ after deactivating $c_8$. Black edges are used for the star forest of $v_i$, grey edges for the star forest of $u_i$; an inner circle identifies the central vertices. $K_1$ is the rightmost component; $c_7$ is the next vertex to be considered, and it will eventually become red.}
    \label{fig:update}
\end{figure}

\begin{lemma}
\label{lem:search}
The total time spent on all calls to the search procedure in the current iteration is $\cO(|L|-|K_{1}|)$.
\end{lemma}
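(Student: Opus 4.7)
The plan is to charge the running time of each invocation of the search procedure against the set of vertices that it (explicitly or implicitly) colours, and to bound the total charge by $\cO(|L|-|K_1|)$. Since $K_1$ is treated as red implicitly, i.e.\ in $\cO(1)$ time via the rule that looking up the colour of a node returns red whenever the node belongs to $K_1$, no initialisation cost proportional to $|K_1|$ is incurred. Every other vertex of $L \setminus \{c\}$ starts white and becomes non-white at most once during the iteration. Hence the sets of vertices coloured by distinct invocations are pairwise disjoint subsets of $L \setminus (K_1 \cup \{c\})$ and contain at most $|L|-|K_1|-1$ vertices in total.

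Next I will argue that the cost of a single non-aborting call is proportional to the number of vertices it newly colours, by exploiting the star-forest structure of the auxiliary graph. Each adjacency-list access runs in constant time by Lemma~\ref{lem:simulation}, and every vertex that gets dequeued during a call lies in $L \setminus K_1$. Because $K_1$ is a single star in the $v_i$-forest, its $|K_1|-1$ internal edges are never traversed: both endpoints of such an edge belong to $K_1$ and so neither is ever dequeued. More generally, every edge of the auxiliary graph with at least one endpoint in the processed region is traversed at most twice across the whole iteration, and the restriction of either of the two star forests to that region is a subforest on $\cO(|L|-|K_1|)$ vertices, hence on $\cO(|L|-|K_1|)$ edges. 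Finally, the immediate-abort calls (where $x$ is already non-white) contribute only $\cO(1)$ each, and their number is bounded by $k-1 \le |L|-|K_1|$ because $K_2,\ldots,K_k$ are disjoint subsets of $L\setminus K_1$.

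The main obstacle I anticipate is making the edge-traversal part of the analysis actually land at $\cO(|L|-|K_1|)$ rather than the trivial $\cO(|L|)$. The crucial property for this is that the BFS terminates as soon as it meets a red vertex, so the adjacency list of a $K_1$-vertex is never iterated: red neighbours are only witnessed from the outside. Together with the fact that $K_1$ forms an entire star in the $v_i$-forest, this yields the $|K_1|$-sized saving in both forests, which is exactly what separates the desired bound from the weaker one. Combining the disjoint-vertex-charging argument with this structural saving gives the stated $\cO(|L|-|K_1|)$ bound.
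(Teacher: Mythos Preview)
Your approach matches the paper's: both split the traversed edges into those with both endpoints in $L\setminus K_1$ (bounded via the two-forest property on $L\setminus K_1$) and those reaching into $K_1$ (bounded via early termination). One point deserves sharpening: your claim that early termination ``yields the $|K_1|$-sized saving in both forests'' is accurate for the $v_i$-forest, where $K_1$ is an entire star and hence has no edges to $L\setminus K_1$, but it does not directly handle the $u_i$-forest, where the vertices of $K_1$ need not form a single star and crossing edges can still be traversed from the $L\setminus K_1$ side. The paper closes this by observing that each search call encounters at most one red vertex and therefore traverses at most one edge into $K_1$, giving $\cO(k-1)\le\cO(|L|-|K_1|)$ such traversals in total; you state the premise (``BFS terminates as soon as it meets a red vertex'') but should draw this consequence explicitly rather than appealing to a ``$|K_1|$-sized saving'' in the $u_i$-forest.
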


\begin{proof}
All vertices visited in the current iteration belong to $L$. The search is terminated as soon as
we encounter a red vertex, and all vertices of $K_{1}$ are red from the beginning. Therefore, each run of
the search procedure encounters at most one vertex of $K_{1}$, and we can account for traversing
the edge leading to this vertex separately paying $\cO(k-1)=\cO(|L|-|K_{1}|)$ overall. It remains to
bound the number of all other traversed edges. This is enough to bound the overall time of the
traversal, because every edge is traversed at most twice, and the number of visited isolated vertices
is at most $k-1=\cO(|L|-|K_{1}|)$.

For any other edge $e=\{u,v\}$, we have $u,v\in L$ but $u,v\notin K_{1}$. These edges can
be partitioned into two forests by considering whether they originate from $\comp'(v_{i})$
or $\comp(u_{i})$. Consequently, we must analyse the total number of edges in a union
of two forests spanning $L\setminus K_{1}$. But this is of course $\cO(|L|-|K_{1}|)$,
proving the lemma.
\end{proof}

We now need to analyse the sum of $|L|-|K_{1}|$ over all the iterations. Because $\bigcup_{j=1}^{k}K_{j}\subseteq L$,
we can split this expression into two parts:
\begin{enumerate}
\item $L\setminus \bigcup_{j=1}^{k}K_{j}$, 
\item $\sum_{j=2}^{k}|K_{j}|$.
\end{enumerate}
Because the sets $L\setminus \bigcup_{j=1}^{k}K_{j}$ considered in different iterations
are disjoint, the first parts sum up to $\cO(n)$. It remains to
bound the sum of the second parts. This will be done by the following argument.
Consider an arbitrary $G_{v}$ corresponding to a subgraph induced by all the species and a subset of $2^{d}$ characters.
Whenever its connected component $K$ is split into smaller connected components
$K_{1},K_{2},\ldots,K_{k}$ after deactivating a character $c$ in the subtree of $v$, the second part $\sum_{j=2}^{k}|K_{j}|$
is distributed among the vertices of $\bigcup_{j=2}^{k}K_{j}$. That is, each node of $\bigcup_{j=2}^{k}K_{j}$ pays 1.
Observe that the size of the connected component containing such a node decreases by a factor of at least 2,
because $|K_{2}|,|K_{3}|,\ldots,|K_{k}| \leq |K|/2$.
To bound the sum of second parts, we analyse the total cost paid by all the nodes of $G_{v}$
due to deactivating the characters in the subtree of $v$ (recall that in the end all such characters
are deactivated).

\begin{lemma}
The total cost paid by the nodes of $G_{v}$, over all $2^{d}$ deactivations affecting $v$, is $\cO(N\cdot d)$.
\end{lemma}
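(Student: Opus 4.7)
The plan is to charge the total cost at $v$ to individual vertices of $G_v$: it equals $\sum_{u} p_u$, where $p_u$ counts the number of times vertex $u$ ends up in a non-largest piece $K_j$ (with $j \ge 2$) of a split of its current component, across all deactivations of characters in $v$'s subtree. By the halving observation stated right before the lemma, each such occurrence at least halves $|C^u|$, the current size of $u$'s component in $G_v$.

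I would proceed via an amortized analysis with the potential $\Phi = \sum_{u} \log_2 |C^u|$ ranging over vertices currently in $G_v$. An algebraic computation, using $|K_j| \le |K|/2$ and hence $\log_2 |K_j| \le \log_2 |K| - 1$ for $j \ge 2$, shows that every split of $K$ into $K_1,\ldots,K_k$ yields
\[
\Delta \Phi \;=\; \sum_j |K_j| \log_2 |K_j| - |K| \log_2 |K| \;\le\; -\sum_{j \ge 2} |K_j|,
\]
where the drop by $1$ for each paying vertex comes from its $K_j$ (with $j\ge 2$) and the $j=1$ term contributes non-positively. Telescoping over all splits at $v$, the total cost is at most the initial potential $\Phi_0 = \sum_{K_0} |K_0|\log_2|K_0|$, summed over the initial connected components of $G_v$.

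The main obstacle is to improve the naive estimate $\Phi_0 \le (N+2^d)\log_2(N+2^d) = \cO(N\log N)$ to the promised $\cO(Nd)$. For this I would exploit the structural constraint that $G_v$ contains only $2^d$ characters: every non-singleton component of $G_v$ must contain at least one of them, so there are at most $2^d$ non-trivial initial components. Combined with an inductive argument on the height $d$ that tracks how the components of $G_v$ arise by merging those of $G_{v_\ell}$ and $G_{v_r}$ via shared species (cf.\ Lemma~\ref{lem:union}), this refined accounting yields $\Phi_0 = \cO((N+2^d)\,d)$; since $2^d \le N$ as $\DT(G)$ has depth $\log N$, the desired bound $\cO(Nd)$ follows.
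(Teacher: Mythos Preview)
Your potential argument through step~4 is correct: defining $\Phi=\sum_{u}\log_{2}|C^{u}|$ and using $|K_{j}|\le |K|/2$ for $j\ge 2$ does yield $\Delta\Phi\le -\sum_{j\ge 2}|K_{j}|$, so the total cost at $v$ is at most $\Phi_{0}$. The gap is in step~5: the claim $\Phi_{0}=\cO(Nd)$ is simply false. Take $d=1$ and let both characters be adjacent to all $N$ species; then $G_{v}$ has a single component of size $N+2$, so $\Phi_{0}=(N+2)\log_{2}(N+2)=\Theta(N\log N)$, while the target bound is $\cO(N\cdot 1)=\cO(N)$. No ``inductive argument tracking merges'' can repair this, because $\Phi_{0}$ is determined solely by the component sizes of $G_{v}$, and those sizes are unconstrained by $d$ beyond the number of nontrivial components. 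Your potential is insensitive to the one parameter that makes the lemma true: that only $2^{d}$ deactivations ever hit $v$.

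The paper exploits exactly this. It buckets deactivations by their cost $\sum_{j\ge 2}|K_{j}|\in[N/2^{t+1},N/2^{t})$; a pigeonhole on charged vertices (each vertex, once charged by a deactivation in this bucket, lands in a component of size $<N/2^{t}$, and after the next charge in a component of size $<N/2^{t+1}$) limits the number of deactivations per bucket to $\cO(2^{t})$. Crucially, the number of deactivations is also at most $2^{d}$, so only buckets $t\le d$ contribute nontrivially, and summing $\cO(2^{t})\cdot N/2^{t}$ over $t=0,\ldots,d$ gives $\cO(Nd)$. If you want to salvage a potential-style proof, the potential would have to encode the remaining number of deactivations (e.g., something like $\sum_{u}\min\{d,\log_{2}|C^{u}|\}$); the plain $\sum_{u}\log_{2}|C^{u}|$ can only deliver $\cO(N\log N)$.
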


\begin{proof}
We claim that in the whole process there can be at most $2^{t+1}$ deactivations incurring a cost from $[N/2^{t+1},N/2^{t})$.
Assume otherwise, then there exists a vertex $x$ charged twice by such deactivations. As a result of
the first deactivation, the size of the connected component containing $x$ drops from less than $N/2^{t}$ to below $N/2^{t+1}$.
Consequently, during the next deactivation that charges $x$ the cost must be smaller than $N/2^{t+1}$, a contradiction.
As we have $2^{d}$ deactivation overall, the total cost can be at most:
\[ \sum_{t=0}^{d} 2^{t+1}\cdot N/2^{t} = \cO(N\cdot d) \]
as claimed.
\end{proof}

To complete the analysis, we observe that there are $N/2^{d}$ nodes of $\DT(G)$ such that we have
$2^{d}$ deactivations affecting $v$. The sum of the second parts is thus:
\[ \sum_{d=0}^{\log n} N/2^{d} \cdot n\cdot d  < N^{2} \sum_{d=0}^{\infty} d/2^{d} = \cO(N^{2}).  \]
Overall, the total update time is hence $\cO(N^{2})$.
By Lemmas~\ref{lem:bottleneck} and~\ref{lem:red}, it implies the following:
\linear*

\bibliographystyle{plainurl}
\bibliography{references}

\end{document}